\newtheorem{thm}{Theorem}[section]
\newtheorem{lem}[thm]{Lemma}
\newtheorem{defn}[thm]{Definition}
\begin{document}

%%%\preprint{APS/123-QED}

\title{Supersymmetry breakdown for  an  extended version\\ of the 
 Nicolai supersymmetric fermion lattice  model}
%%%Manuscript Title:\\with Forced Linebreak}% Force line breaks with \\
%%%\thanks{A footnote to the article title}%

\author{Hajime Moriya}
%% \altaffiliation[Also at ]{Physics Department, XYZ University.}%Lines break automatically or can be forced with \\
%%\author{Second Author}%
%%% \email{Second.Author@institution.edu}
\affiliation{%
Faculty of Mechanical Engineering, Institute of Science and Engineering, Kanazawa University,
Kakuma, Kanazawa 920-1192, Japan
%%%% Authors' institution and/or address\\
%%% This line break forced with \textbackslash\textbackslash
}%

%%\collaboration{MUSO Collaboration}%\noaffiliation

\date{\today}% It is always \today, today,
             %  but any date may be explicitly specified

\newcommand{\cstar}{{{C}}^{\ast}}%
%%%%%%% %%%%%%%%%%%%%%%%%%%%%%%%%%%%%%%%%
%%%KIGOU %%%%%%%%%%%%%%%
\newcommand{\diam}{{\rm{diam}}}%%%
%%%%%%%%%%%%%%%%%%%%%%%%%%s
\newcommand{\R}{{\mathbb{R}}}%
\newcommand{\Z}{{\mathbb{Z}}}%
\newcommand{\CC}{{\mathbb{C}}}%
\newcommand{\NN}{{\mathbb{N}}}%
\newcommand{\nonum}{\nonumber}%
%%%%
%%%%
\newcommand{\Lam}{\Lambda}%
\newcommand{\vareps}{\varepsilon}%
%%%
\newcommand{\I}{{\mathrm{I}}}%
\newcommand{\J}{{\mathrm{J}}}%
\newcommand{\K}{{\mathrm{K}}}%
%%%
\newcommand{\Al}{\mathcal{A}}%
\newcommand{\Ale}{\Al_{+} }%
\newcommand{\Alo}{\Al_{-}}%
\newcommand{\Aleo}{\Al_{\pm} }%
%%%%%%%%%%%%%%%
%%
\newcommand{\AlI}{{\Al}({\I})}%
\newcommand{\AlJ}{{\Al}({\J})}%
%%%%
\newcommand{\core}{\Al_{\circ}}%
\newcommand{\coree}{{\core}_+}%
\newcommand{\coreo}{{\core}_-}%
\newcommand{\coreeo}{{\core}_\pm}%
\newcommand{\coreoe}{{\core}_\mp}%
%%%
\newcommand{\AlIe}{\AlI_{+}}%
\newcommand{\AlIo}{\AlI_{-}}%
\newcommand{\AlIeo}{\AlI_{\pm}}%
%%%
%%%
\newcommand{\AlJe}{\AlJ_{+}}%
\newcommand{\AlJo}{\AlJ_{-}}%
\newcommand{\AlJeo}{\AlJ_{\pm}}%
%%%
%%%%%%%%%%%%%
\newcommand{\cicr}{c_i^{\ast}}%
\newcommand{\ci}{c_i}%
\newcommand{\cjcr}{c_j^{\ast}}%
\newcommand{\cj}{c_j}%
%%%%%%
%%%%%%
\newcommand{\cjpocr}{c_{j+1}^{\ast}}%
\newcommand{\cjpo}{c_{j+1}}%
%%%%%%
\newcommand{\cjmocr}{c_{j-1}^{\ast}}%
\newcommand{\cjmo}{c_{j-1}}%
%%%%%%
%%%%%%
%%%%%%%
%%%%%%
\newcommand{\cimcr}{c_{i-1}^{\,\ast}}%
\newcommand{\cim}{c_{i-1}}%
\newcommand{\cipcr}{c_{i+1}^{\,\ast}}%
\newcommand{\cip}{c_{i+1}}%
%%%%
\newcommand{\cnijonecr}{c_{2j+1}^{\ast}}%
\newcommand{\cnijcr}{c_{2j}^{\ast}}%
%%%
%%%%
%%
\newcommand{\ctwoi}{c_{2i}}%%%%
\newcommand{\ctwoicr}{c_{2i}^{\, \ast}}%%%%
\newcommand{\ctwoipcr}{c_{2i+1}^{\, \ast}}%%%%
\newcommand{\ctwoip}{c_{2i+1}}%%%%%%%%%
\newcommand{\ctwoimcr}{c_{2i-1}^{\, \ast}}%%%%
\newcommand{\ctwoim}{c_{2i-1}}%%%%%%%%
%%%%
%%%%
\newcommand{\ctwoj}{c_{2j}}%%%%
\newcommand{\ctwojcr}{c_{2j}^{\, \ast}}%%%%%%
\newcommand{\ctwojpcr}{c_{2j+1}^{\, \ast}}%%%%%%
\newcommand{\ctwojp}{c_{2j+1}}%%%%%%%%%%%
\newcommand{\ctwojmcr}{c_{2j-1}^{\, \ast}}%%%%
\newcommand{\ctwojm}{c_{2j-1}}%%%%
%%%%
%%%
\newcommand{\ctwokcr}{c_{2k}^{\, \ast}}%
\newcommand{\ctwok}{c_{2k}}%
\newcommand{\ctwokpcr}{c_{2k+1}^{\, \ast}}%
\newcommand{\ctwokp}{c_{2k+1}}%
\newcommand{\ctwokmcr}{c_{2k-1}^{\, \ast}}%
\newcommand{\ctwokm}{c_{2k-1}}%
%%%
%%%%%%%%
\newcommand{\ctwolcr}{c_{2l}^{\, \ast}}%
\newcommand{\ctwol}{c_{2l}}%
\newcommand{\ctwolpcr}{c_{2l+1}^{\, \ast}}%
\newcommand{\ctwolp}{c_{2l+1}}%
\newcommand{\ctwolmcr}{c_{2l-1}^{\, \ast}}%
\newcommand{\ctwolm}{c_{2l-1}}%
%%%
%%%%
\newcommand{\zeromap}{{\mathbf{0}}}%%%
%%%%
\newcommand{\Fp}{F_{+}}%
\newcommand{\Fm}{F_{-}}%
%%%%
\newcommand{\Ap}{A_{+}}%
\newcommand{\Am}{A_{-}}%
%%%%
\newcommand{\Fpm}{F_{\pm}}%
\newcommand{\Gp}{G_{+}}%
\newcommand{\Gm}{G_{-}}%
\newcommand{\Gpm}{G_{\pm}}%
\newcommand{\Gpmd}{{\Gpm}^{\ast}}%
%%%
\newcommand{\Hil}{\mathscr{H}}%
\newcommand{\Ome}{\Omega}%
\newcommand{\Hilome}{{\Hil}_{\omega}}%
\newcommand{\Hilvp}{{\Hil}_{\varphi}}%
%%%
%%%
\newcommand{\piome}{\pi_{\omega}}%
\newcommand{\Omeome}{\Ome_{\omega}}%
\newcommand{\pivp}{\pi_{\varphi}}%
\newcommand{\Omevp}{\Ome_{\varphi}}%
\newcommand{\Bl}{{\mathfrak B}}%%%%%%%%%%
%%%
\newcommand{\BO}{\textrm{O}}%
\newcommand{\lo}{\textrm{o}}%
%%%%%%%%%%%%%%%%%%%%%%%%%%%%
\begin{abstract}
 Sannomiya-Katsura-Nakayama have  recently  studied  
 an   extension  of  the Nicolai  supersymmetric fermion lattice model 
 which is named  ``the extended Nicolai model''. 
The extended Nicolai model is  parameterized by an adjustable constant  $g\in\R$ in  
 its defining supercharge, and  satisfies   ${{\cal{N}}}=2$  supersymmetry.
We  show  that  for  any non-zero $g$ the  extended Nicolai  model 
   breaks   supersymmetry  dynamically,   and  the  energy density
 of  any homogeneous  ground state for the model  is strictly positive.
%%% (that breaks   supersymmetry)
\end{abstract}

\pacs{Valid PACS appear here}% PACS, the Physics and Astronomy
                            % Classification Scheme.
%\keywords{Suggested keywords}%Use showkeys class option if keyword
                              %display desired
\maketitle

%\tableofcontents
\section{Purpose}
\label{sec:INTRO}
In  \cite{SKN} Sannomiya-Katsura-Nakayama  investigated  supersymmetry breakdown  
  for an  extended version of the Nicolai 
supersymmetric fermion lattice model  \cite{NIC}. 
 As this   model   satisfies the  algebraic relation 
 of  ${{\cal{N}}}=2$  supersymmetry, and 
 it is reduced  the original Nicolai model   when 
 its  adjustable parameter $g\in\R$ is equal to $0$, it  is   called 
 {\it{the extended Nicolai model}}.  

 Supersymmetry breakdown  
 for the extended Nicolai model has been shown  for   
any  non-zero   $g$   on finite systems \cite{SKN}.
In the infinite-volume limit,  however, 
Sannomiya-Katsura-Nakayama verified   supersymmetry breakdown of the extended Nicolai model 
  only when  $g> g_0:=4/\pi$.
%%%% and the energy density is strictly positive.
This  restriction upon the parameter $g$ seems  to be merely due to 
  technical nature  and its  physics meaning  is unclear. 
The purpose of this  note is to remove this    restriction 
 upon  $g$ in the case of  the infinite-volume limit. 
 We   show  that  for any $g\ne 0$ 
the  extended Nicolai  model defined on  $\Z$  breaks supersymmetry 
 dynamically.
Furthermore, we prove that  for any $g\ne 0$  the energy density of  
 any (homogeneous) ground state 
 for  the extended Nicolai model is  strictly positive.

 In \cite{SKN} it is noted    that 
 even if  supersymmetry is broken for any finite subsystem,  
 there may be   restoration of   supersymmetry  
in the infinite-volume limit as  exemplified in \cite{WITT82}.
We  clarify that  such restoration  
  can not  happen for the extended Nicolai model
 by  formulating  the model as  
  supersymmetric $\cstar$-dynamics in our framework  \cite{AHP}. 
   Our  proof based on  $\cstar$-algebraic methods  is rather 
   model-independent. It   makes  essential use of      
  a  crucial finding by Sannomiya-Katsura-Nakayama  
 (Eq.(15) of \cite{SKN})
 that will be   reformulated  in terms of superderivations of an
 infinite-volume  $\cstar$-system.

\section{The extended   Nicolai supersymmetric fermion lattice  model}
\label{sec:MODEL}
%%%We  recall the supersymmetric fermion lattice model given in  \cite{SKN}.
%%%It is a perturbation   of   the Nicolai model \cite{NIC}, 
%%% so  when the parameter is trvial, it is identical to the Nicolai model.
We consider  spinless fermions  over an infinitely extended lattice $\Z$.
Let $\ci$ and $\cicr$ denote the annihilation 
operator and the creation operator of a spinless fermion at $i\in \Z$, respectively. Those obey  the canonical anticommutation relations (CARs):
 For  $i,j\in\Z$   
\begin{align}
\label{eq:CAR}
\{ \cicr, \cj \}&=\delta_{i,j}1, \nonumber \\
\{ \cicr, \cjcr \}&=\{ \ci, \cj \}=0.
\end{align}
 The fermion number operator on  each site  $i\in \Z$ 
is given by $n_i:=\cicr \ci$.

For any $g\in\R$ we take the following 
 infinite sum of local  fermion operators:
\begin{align}
\label{eq:Qg}
Q(g):=
\sum_{k\in \Z} (gc_{2k-1}+ c_{2k-1} c^{\ast}_{2k} c_{2+1}).
\end{align}
It is interpeted as  perturbation of  the supercharge of the original Nicolai 
 model $Q(0)$ by another  supercharge $\sum_{k\in \Z} c_{2k-1}$ multiplied by $g$.  
Note that  the perturbed term $\sum_{k\in \Z} c_{2k-1}$ itself 
 generates  a trivial model. 
By  some  formal  computation using  \eqref{eq:CAR} we see  that  
 $Q(g)$ is  nilpotent:
\begin{align}
\label{eq:Qg-nil}
0=Q(g)^2={Q(g)^{\ast}}^{2}.
\end{align}
Let a supersymmetric Hamiltonian  be  given  as 
\begin{equation}
\label{eq:Hgper}
H(g):=\{Q(g),\; Q(g)^{\ast} \}.
\end{equation}
For any $g\in \R$,
 the model has ${{\cal{N}}}=2$  supersymmetry by definition.
%%%%% as the nilpotent condition $Q(g)^2=0$ is satisfied.
As noted  above, if $g=0$, 
 it corresponds to   the supersymmetric fermion lattice  model  defined by  Nicolai in  \cite{NIC}.
%%% So we shall call the above model   {\it{the extended Nicolai model}}.  

We note that  in the infinite-volume  system 
 either $Q(g)$ or  $Q(g)^{\ast}$,  or both 
  can not exist as a well-defined linear  
 operator if  the supersymmetry associated with  them    
   breaks dynamically: In fact we will show that this is the case unless  $g=0$.
Nevertheless,  its supersymmetric dynamics always makes sense 
   in the infinitely extended system  as we will see  later. 

We  shall  consider  the model  under   periodic boundary conditions as 
  in \cite{SKN}. 
Let $M, N\in 2\NN$. Define 
\begin{align}
\label{eq:Qgper}
\widetilde{Q}(g)_{[-M+1,N]}:=
\sum_{k=-M/2+1}^{N/2} \left(gc_{2k-1}+ c_{2k-1} c^{\ast}_{2k} c_{2k+1}\right),
\end{align}
where $N+1$ is  identified with   $-M+1$.
We see that 
\begin{align}
\label{eq:Qgper-nil}
0=\widetilde{Q}(g)_{[-M+1,N]}^2={\widetilde{Q}(g)_{[-M+1,N]}^{\ast\;2}}.
\end{align}
Then we  define the corresponding local supersymmetric Hamiltonian on
 the same region $[-M+1,N]$  as 
\begin{equation}
\label{eq:Hgper}
\widetilde{H}(g)_{[-M+1,N]}:=\Bigl\{\widetilde{Q}(g)_{[-M+1,N]},\; {\widetilde{Q}(g)}_{[-M+1,N]}^{\ast} \Bigr\}.
\end{equation}

Also we may consider free boundary conditions upon  supercharges.  Let 
\begin{align}
\label{eq:Qgfree}
\widehat{Q}(g)_{[-M+1,N+1]}&:=
\sum_{k=-M/2+1}^{N/2} \left(gc_{2k-1}+ c_{2k-1} c^{\ast}_{2k} c_{2k+1}\right)+gc_{N+1}.
%%\nonumber \\
%%&=
%%%g\left(c_{-M+1}+c_{-M+3}+\cdots+c_{-1}+c_{1}+\cdots +c_{N-1}+c_{N+1}\right)+\nonumber\\
%%%&\quad \left(c_{-M+1} c^{\ast}_{-M+2} c_{-M+3}+   \cdots +c_{1} c^{\ast}_{2} c_{3}
%%%+\cdots + c_{N-1} c^{\ast}_{N} c_{N+1}\right).
\end{align}
We see that 
\begin{align}
\label{eq:Qgper-nil}
0=\widehat{Q}(g)_{[-M+1,N+1]}^2={\widehat{Q}(g)_{[-M+1,N+1]}^{\ast\;2}}.
\end{align}
We give  a local supersymmetric Hamiltonian 
upon the same region  $[-M+1,N+1]$  by  the following  supersymmetric form:
\begin{equation}
\label{eq:Hgfree}
\widehat{H}(g)_{[-M+1,N+1]}:=\Bigl\{\widehat{Q}(g)_{[-M+1,N+1]},\; {\widehat{Q}(g)}_{[-M+1,N+1]}^{\ast} \Bigr\}.
\end{equation}
Note that we have imposed the free boundary condition upon local supercharges 
 rather than   local Hamiltonians. Thus  
$\widehat{H}(g)_{[-M+1,N+1]}$
 differs from the usual  free-boundary Hamiltonian upon $[-M+1,N+1]$ that has 
 more terms near the edges  
 although both of them   are  localized in  $[-M+1,N+1]$ 
 and give rise to 
 the same time evolution   as  we let  $[-M+1,N+1]$   to $\Z$.
(See the next section.)

 \section{Mathematically rigorous formulation}
\label{sec:MODEL}
 In  \cite{AHP}  
   a general framework  of supersymmetric fermion lattice models is given. 
 By using this framework we shall  reformulate  the extended Nicolai  model  
 introduced  in the preceding section
 as supersymmetric C*-dynamics {\footnote{Usually 
 one first considers  finite systems  and
 then takes  their   infinite volume limit.   
 We have a  different viewpoint here.  
 First we are given an  infinitely extended system upon $\Z$
 that represents the total system. The total system  includes  subsystems as its subalgebras.  See \cite{BR} for C*-algebraic   treatment of quantum statistical mechanics.}}.

For each finite subset $\I\Subset\Z$, 
 let $\AlI$ denote  the  finite-dimensional algebra
generated by $\{\ci, \, \cicr\, ;\;i\in \I\}$.
 For    $\I \subset \J\Subset \Z$, $\AlI$ is 
 imbedded into $\AlJ$. We define 
\begin{equation} 
\label{eq:CARloc}
\core:=\bigcup_{\I \Subset \Z }\AlI,
\end{equation}
where all finite subsets $\I$ of $\Z$ are taken.
The  norm completion of the  $\ast$-algebra $\core$ (with the  operator norm)
 yields  a   $\cstar$-algebra $\Al$ which is   known as   the CAR algebra.
 The dense subalgebra $\core$  is usually   called the local algebra.

Let $\sigma$ denote the shift-translation  automorphism group on $\Al$:
 For  each  $k\in\Z$
\begin{align}
\label{eq:sigk}
\sigma_{k} (c_{i})=c_{i+k},\quad 
\sigma_{k} (c_{i}^{\ast})=c_{i+k}^{\ast},
\quad \forall i  \in \Z.
\end{align}

Let $\gamma$ denote the grading automorphism on the $\cstar$-algebra  $\Al$ determined  by    
 \begin{equation}
\label{eq:CARgamma}
\gamma(\ci)=-\ci, \quad \gamma(\cicr)=-\cicr,\quad \forall i\in \Z.
\end{equation}
%%%Obviously,   $\gamma\circ \gamma={\text{id}}$. 
The total system $\Al$ is decomposed into the even part and the odd part:  
\begin{align}
\label{eq:grad}
\Al&=\Ale\oplus \Alo,\quad 
\Ale= \{A\in \Al| \; \gamma(A)=A\},\quad
\Alo= \{A\in \Al|\;  \gamma(A)=-A\}.
\end{align}
For each  $\I\Subset\Z$ 
\begin{equation}
\label{eq:CARIeo}
\AlI=\AlIe\oplus\AlIo,\ \ \text{where}\ \ 
 \AlIe := \AlI\cap \Ale,\quad  \AlIo := \AlI\cap \Alo.
 \end{equation}
The  graded commutator   is  defined as  
\begin{align}
\label{eq:gcom}
[\Fp, \;  G]_{\gamma} &= [\Fp, \;  G]
  {\mbox {\ \ for \ }}
\Fp \in \Ale, \ G \in \Al, \nonum\\ 
[\Fm, \; \Gm]_{\gamma} &= \{\Fm,  \; \Gm\} {\mbox {\ \ for \ }}
\Fm \in \Alo, \ \Gm \in \Alo. 
\end{align}
%%As in \eqref{eq:grad}
From   the  canonical anticommutation relations 
 \eqref{eq:CAR}   the graded-locality follows:  
\begin{equation}
\label{eq:gloc}
[A,\; B]_{\gamma}=0{\text {\ \ for all  \ }} A \in \AlI 
{\text {\ and \ }}B\in \AlJ {\text {\ \ if   \ }} \I\cap\J=\emptyset,\ \I,\J\Subset\Z.
\end{equation}

The notation   $Q(g)$  of  Eq.\eqref{eq:Qg}  is a  merely formal expression 
 of  the supercharge.
Nevertheless  it gives  a well-defined  infinitesimal fermionic transformation.
Define  a superderivation (a linear map that satisfies 
the graded Leibniz rule) from $\core \to\core$ by 
\begin{equation}
\label{eq:delg}
\delta_g(A):=[Q(g),\; A]_{\gamma}
{\text {\ \ for every  \ }} A \in \core.
\end{equation}
 Similarly the  conjugate superderivation  is given by
\begin{equation}
\label{eq:delgast}
\delta_g^{\ast}(A):=[Q(g)^{\ast},\; A]_{\gamma}
{\text {\ \ for every  \ }} A \in \core. 
\end{equation}
Let us explain   the formula  \eqref{eq:delg} in some depth.
For each  {\it{fixed}} local element $A\in \core$, 
 only finite terms  in the summation formula of 
  $Q(g)$  are involved in $[Q(g),\; A]_{\gamma}$, because 
 there is a least  $\I\Subset\Z$ 
 (with respect to  the inclusion)   such that $A\in \AlI$, and 
 by  the graded-locality \eqref{eq:gloc}
 only  local fermion terms of \eqref{eq:Qg}
that have non-trivial intersection with $\I$ 
 may contribute to  $[Q(g),\; A]_{\gamma}$. 
The other infinite number of  terms give zero.
Therefore   there exist $M_{0}, N_{0}\in2\NN$
 such that $[-M_{0}+1,N_{0}]\supset \I$ and 
 the  identity   
\begin{equation}
\label{eq:delg-local}
\delta_g(A)=\left[\widetilde{Q}(g)_{[-M+1,N]},\; A\right]_{\gamma}
\end{equation}
holds  for all  $M(\in 2\NN)\ge M_0$ and $N(\in 2\NN) \ge N_0$, 
 where  the  periodic-boundary condition as in   \eqref{eq:Qgper} is  used. 
For example,  if $\I$ is a finite interval of the type 
 $[-S+1,-S+2, \cdots,T-1, T]$ with $(S,T\in2\NN)$, 
then it is enough to take $M_{0}=S+2,N_{0}=T+2$.
An  analogous identity by   free-boundary supercharges 
 \eqref{eq:Qgfree} is possible.
For each $A\in \core$ we have
 the asymptotic formula 
\begin{equation}
\label{eq:asympN}
\delta_g(A)=\lim_{N\to\infty}\left[\widetilde{Q}(g)_{[-N+1,N]},\; A\right]_{\gamma},
\end{equation}
and similarly 
\begin{equation}
\label{eq:asympNrfee}
\delta_g(A)=\lim_{N\to\infty}\left[\widehat{Q}(g)_{[-N+1,N+1]},\; A\right]_{\gamma}.
\end{equation}
 The nilpotent condition \eqref{eq:Qg-nil} is  expressed by
the superderivation $\delta_g$ as 
\begin{equation}
\label{eq:delnilpotent}
\delta_g\circ \delta_g=
\delta_g^{\ast}\circ \delta_g^{\ast}=\zeromap.
\end{equation}

Define
 the derivation generated by  the Hamiltonian $H(g)$. 
\begin{equation}
\label{eq:Derh-intro}
d_g(A):=[H(g),\; A]
{\text {\ \ for every  \ }} A \in \core. 
\end{equation}
 This is   the  infinitesimal time-generator of  the model.  
We  can  immediately  verify the following  supersymmetric relation:
\begin{equation}
\label{eq:del-SUSY-relations}
d_g(A)=
\delta_g^{\ast}\circ \delta_g(A)+
\delta_g \circ \delta_g^{\ast}(A)
{\text {\ \ for every  \ }} A \in \core. 
\end{equation}
It has been known   that  short-range interactions of
  fermion lattice systems    give   Hamiltonian dynamics in the 
infinite-volume limit  \cite{BR}.  
Somewhat heuristically, 
we have   for any $A\in\Al$ and $t\in\R$
\begin{align*}
%%%\label{eq:alphat-heu}
\alpha_{g}(t)(A)%&\equiv \exp(it H(g))A\exp(-it H(g))\\
:=\lim_{N\to\infty}\exp\left(it \widetilde{H}(g)_{[-N+1,N]}\right)A
\exp\left(-it \widetilde{H}(g)_{[-N+1,N]}\right)\in\Al,
\end{align*} 
%%%This identity   can be extended to  all  $A\in \Al$.
where  special 
 local Hamiltonians given in  \eqref{eq:Hgper}  are used for concreteness.
However,  we may take  any  boundary condition upon local Hamiltonians on 
 finite  subsystems. 
%%%For  rigorous discussion we refer to  \cite{BR}.
% as long as local Hamiltonians  eventually give 
%the derivarion $d_g$ on $\core$  
%as the index $N$  goes to infinity,
 
We can  construct    supersymmetric dynamics 
 in the infinitely extended  system corresponding to the extended Nicolai model 
  as in the following theorem. Note that 
 it holds irrespective of broken-unbroken supersymmetry.
\begin{thm}
\label{thm:DYN}
For each $g\in\R$
 the superderivation $\delta_g$ generates  a supersymmetric dynamics in $\Al$.
 Precisely,  there exists  a strongly continuous one parameter
group of $\ast$-automorphisms $\alpha_{g}(t)$ ($t\in \R$)
on  $\Al$ whose pre-generator  is given by the derivation  $d_g\equiv 
\delta_g^{\ast}\circ  \delta_g+\delta_g \circ \delta_g^{\ast}$  
 on the local algebra  $\core$. 
\end{thm}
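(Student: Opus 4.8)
The plan is to recognize $H(g)$ as the (purely formal) Hamiltonian of a uniformly bounded, finite-range, even interaction on the CAR algebra $\Al$, and then to invoke the standard existence theory for quasi-local fermion dynamics \cite{BR} (see also \cite{AHP}); the only model-specific work is to identify the interaction explicitly and to reconcile the particular boundary prescriptions of the previous section with the standard truncated-interaction Hamiltonians.

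First I would make the interaction explicit. Put $q_k:=g\,c_{2k-1}+c_{2k-1}c_{2k}^{\ast}c_{2k+1}$, an odd element of $\core$ supported on the three sites $\{2k-1,2k,2k+1\}$ with $\|q_k\|\le|g|+1$ uniformly in $k$, so that $Q(g)=\sum_{k}q_k$ as a formal sum. Then $H(g)=\{Q(g),Q(g)^{\ast}\}=\sum_{k,l}\{q_k,q_l^{\ast}\}$, and since $q_k,q_l^{\ast}$ are odd, the graded-locality \eqref{eq:gloc} forces $[q_k,q_l^{\ast}]_{\gamma}=\{q_k,q_l^{\ast}\}=0$ whenever their supports are disjoint, i.e.\ whenever $|k-l|\ge 2$. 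Hence $H(g)=\sum_k h_k$ with $h_k:=\{q_k,q_k^{\ast}\}+\{q_k,q_{k+1}^{\ast}\}+\{q_{k+1},q_k^{\ast}\}$ a self-adjoint \emph{even} element of $\core$ supported on the five consecutive sites $\{2k-1,\dots,2k+3\}$ and with $\sup_k\|h_k\|<\infty$; equivalently this defines an even interaction $\Phi$ of range at most $5$ with $\sup_{i\in\Z}\sum_{X\ni i}\|\Phi(X)\|<\infty$. Running the same computation for the truncated supercharges \eqref{eq:Qgper} and \eqref{eq:Qgfree} shows that $\widetilde{H}(g)_{[-N+1,N]}$ and $\widehat{H}(g)_{[-N+1,N+1]}$ each coincide with the truncated-interaction Hamiltonian $\sum_{X\subset[-N+1,N]}\Phi(X)$ up to a self-adjoint correction of norm bounded uniformly in $N$ and supported within a fixed distance of the endpoints.

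Next I would apply the existence theorem for lattice dynamics with a uniformly bounded finite-range interaction \cite{BR}: the norm limits
\begin{equation*}
\alpha_g(t)(A):=\lim_{N\to\infty}\exp\!\bigl(it\,\widetilde{H}(g)_{[-N+1,N]}\bigr)\,A\,\exp\!\bigl(-it\,\widetilde{H}(g)_{[-N+1,N]}\bigr),\qquad A\in\core,
\end{equation*}
exist uniformly for $t$ in compact subsets of $\R$ and extend to a strongly continuous one-parameter group $\{\alpha_g(t)\}_{t\in\R}$ of $\ast$-automorphisms of $\Al$. A Lieb--Robinson (telescoping) estimate, combined with the surface-term bound of the previous paragraph, shows that the same group is obtained if $\widetilde{H}(g)_{[-N+1,N]}$ is replaced by $\widehat{H}(g)_{[-N+1,N+1]}$, or indeed by any sequence of local Hamiltonians differing from $\sum_{X\subset[-N+1,N]}\Phi(X)$ by uniformly bounded edge corrections; this is what makes $\alpha_g$ independent of the boundary prescription. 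Finally, the same theory provides, for each $A\in\core$, a norm-convergent expansion $\alpha_g(t)(A)=\sum_{n\ge 0}\frac{(it)^{n}}{n!}\,d_g^{\,n}(A)$ for $t$ in a fixed neighbourhood of $0$, with $d_g(A)=[H(g),A]$ as in \eqref{eq:Derh-intro}. Consequently $\core$ lies in the domain of the generator of $\alpha_g$, the generator agrees with the derivation $d_g$ on $\core$, and $\core$, being a dense set of analytic elements, is a core; thus $d_g$ is a pre-generator of $\alpha_g$, and by \eqref{eq:del-SUSY-relations} it equals $\delta_g^{\ast}\circ\delta_g+\delta_g\circ\delta_g^{\ast}$ on $\core$, so $\alpha_g$ is the asserted supersymmetric dynamics.

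The point requiring care is conceptual rather than computational: $Q(g)$ in general is not a well-defined operator of $\Al$---the sequel shows this fails precisely when $g\ne 0$---so one cannot exponentiate $Q(g)$ itself and must work entirely within the quasi-local structure through the genuine local Hamiltonians. The one genuinely model-dependent step is the reconciliation above: because $\widetilde{H}(g)_{[-N+1,N]}$ and $\widehat{H}(g)_{[-N+1,N+1]}$ were built by truncating \emph{supercharges} rather than Hamiltonians, one must check that they differ from the standard truncations of $\Phi$ only by uniformly bounded surface terms, which then drop out in the thermodynamic limit of the dynamics. Everything else is the standard Bratteli--Robinson machinery.
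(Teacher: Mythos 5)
Your proposal is correct and is essentially the argument the paper invokes: the paper's own proof is a one-line citation to the general framework of \cite{AHP} (resting on the standard Bratteli--Robinson existence theory for uniformly bounded finite-range even interactions), and your write-up simply spells out the content of that citation --- the identification of the five-site even interaction $h_k$, the boundary-term reconciliation, and the analytic-element/core argument making $d_g$ a pre-generator. No gaps; you have just made explicit what the paper leaves to the reference.
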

\begin{proof}
From  our  work  \cite{AHP}
 the statement follows immediately.
\end{proof}

We need to    fix  the crucial   terminology `` supersymmetry(SUSY) breakdown'' for the present paper.
In physics literature,   SUSY breakdown 
 is usually identified with   strict positivity of  SUSY Hamiltonian, see e.g.
  \cite{WEIN}.
However, one should be caution 
when dealing with  models on  non-compact  space.   
(In fact,  we  see  a relevant remark  by Witten \cite{WITT82}.)
As shown in   Theorem \ref{thm:DYN}     
   superderivations  are building blocks for  supersymmetric  dynamics.
 So let us introduce  the  following  definition which is 
 based on  invariance under  superderivations.
 We consider that  it is  a  straightforward expression  of the  physics concept  
 of symmetry and symmetry breakdown. 
\begin{defn}
\label{defn:broken}
Suppose that  a  superderivation  generates a supersymmetric dynamics 
 as  in   Theorem \ref{thm:DYN}.
If a state of $\Al$ is invariant under  the superderivation defined on the local system  $\core$,
 then it is called a supersymmetric state.
If there exists 
 no supersymmetric state of $\Al$,  then SUSY is spontaneously broken.
\end{defn}

 Sannomiya-Katsura-Nakayama  employed   a  different  definition  \cite{SKN}:
{\it{SUSY is  spontaneously broken if the energy density of ground states is strictly positive.}}

This alternative  definition based on the energy density seems not  satisfactory in some respects. 
First, it is only limited to homogeneous ground states. 
There may exist  non-periodic  ground states 
 that do not have a well-defined energy density 
as  we have observed such states  for  the original Nicolai model \cite{ergodic} \cite{KMN}.
 It  is  not  obvious  how the  status of SUSY 
 for homogeneous states implies  that for  non-homogeneous states in the infinite-volume limit.
 Second,  its  full justification has not yet been  done 
  even for the particular model  (i.e.  
 the extended Nicolai model)  in  \cite{SKN}.
Actually, we shall discuss the  second point in the next section.

\section{Supersymmetry breakdown}
%%% for the extended   Nicolai supersymmetric fermion lattice  model}
\label{sec:}
\subsection{Supersymmetry breakdown in the infinite-volume system}
\label{subsec:}
The  first   theorem   is a direct  consequence of 
 a crucial property of the extended Nicolai model \cite{SKN} which is 
 stated below in \eqref{eq:Ok} and \eqref{eq:Okg}.
 We only need to show that  it  is still  valid in the infinite-volume  limit.
\begin{thm}
\label{thm:gbroken}
For any  $g\ne 0$,  
the extended Nicolai supersymmetric fermion lattice model 
 breaks SUSY  spontaneously.
\end{thm}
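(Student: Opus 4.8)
The plan is to argue by contradiction from Definition~\ref{defn:broken}: suppose $\omega$ is a supersymmetric state of $\Al$, that is, $\omega(\delta_g(A))=0$ for every $A\in\core$, and derive the absurdity $\omega(1)=0$. The only input will be the reformulation, as an exact identity for the superderivation $\delta_g$ on the local algebra, of the Sannomiya--Katsura--Nakayama relation (Eq.(15) of \cite{SKN}). That relation evaluates the anticommutator of a local supercharge with $c_{2j-1}^{\ast}$; I would write it, and verify it directly from the CARs \eqref{eq:CAR}, in the form to be displayed as \eqref{eq:Ok}--\eqref{eq:Okg}:
\begin{equation*}
\delta_g\bigl(c_{2j-1}^{\ast}\bigr)=g\cdot 1+N_j,\qquad
N_j:=c_{2j}^{\ast}c_{2j+1}+c_{2j-3}c_{2j-2}^{\ast}\in\core .
\end{equation*}
Since $c_{2j-1}^{\ast}$ is a fixed local element, graded locality \eqref{eq:gloc} leaves only the two summands with $k\in\{j-1,j\}$ of the formal series \eqref{eq:Qg} contributing to $[Q(g),c_{2j-1}^{\ast}]_{\gamma}$, so by the localization identity \eqref{eq:delg-local} the displayed equation holds verbatim for the genuine infinite-volume superderivation $\delta_g$. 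Checking precisely this is what is meant by ``it is still valid in the infinite-volume limit''.

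Next I would use the algebraic shape of the remainder $N_j$. Its two summands are supported on the disjoint pairs of sites $\{2j,2j+1\}$ and $\{2j-3,2j-2\}$, they therefore commute, and each squares to $0$ by \eqref{eq:CAR}; hence $N_j$ is nilpotent with $N_j^{3}=0$. Consequently, for every $g\neq 0$ the element $v_j:=\delta_g(c_{2j-1}^{\ast})=g\cdot 1+N_j$ is invertible inside $\core$, with $v_j^{-1}=g^{-1}\,1-g^{-2}N_j+g^{-3}N_j^{2}\in\core$. Moreover $\delta_g(v_j)=\delta_g\circ\delta_g(c_{2j-1}^{\ast})=\zeromap$ by \eqref{eq:delnilpotent}, and applying the graded Leibniz rule to $v_j^{-1}v_j=1$ (both factors even) and then multiplying by $v_j^{-1}$ gives $\delta_g(v_j^{-1})=\zeromap$ as well. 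One more application of the graded Leibniz rule yields
\begin{equation*}
\delta_g\bigl(c_{2j-1}^{\ast}\,v_j^{-1}\bigr)
=\delta_g(c_{2j-1}^{\ast})\,v_j^{-1}+\gamma(c_{2j-1}^{\ast})\,\delta_g(v_j^{-1})
=v_j\,v_j^{-1}=1 .
\end{equation*}
As $c_{2j-1}^{\ast}v_j^{-1}\in\core$, supersymmetry of $\omega$ forces $0=\omega\bigl(\delta_g(c_{2j-1}^{\ast}v_j^{-1})\bigr)=\omega(1)=1$, the desired contradiction. Hence for every $g\neq 0$ no supersymmetric state of $\Al$ exists, which is the theorem, and in particular the restoration of supersymmetry in the infinite-volume limit discussed in the introduction cannot occur for this model.

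I expect the one delicate point to be the first step: promoting the Sannomiya--Katsura--Nakayama relation from a finite-system operator identity to an exact identity for the infinite-volume superderivation $\delta_g$ on $\core$. This is exactly where the $\cstar$-formulation is essential --- graded locality \eqref{eq:gloc}, together with \eqref{eq:delg-local}--\eqref{eq:asympN}, shows that through $\delta_g$ the formally infinite supercharge $Q(g)$ acts on any given local element exactly as an appropriate finite truncation does, so the finite-volume computation of \cite{SKN} becomes a bona fide algebraic identity with no surviving boundary term. Everything downstream is elementary linear algebra: nilpotency of $N_j$ makes $\delta_g(c_{2j-1}^{\ast})$ invertible precisely when $g\neq 0$, placing the unit $1$ in the range of $\delta_g|_{\core}$, which is incompatible with a $\delta_g$-invariant state; this is also why the argument is essentially model-independent.
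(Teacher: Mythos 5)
Your proposal is correct and is essentially the paper's own argument: your element $c_{2j-1}^{\ast}v_j^{-1}$ is exactly $\tfrac{1}{g}O_j$ with $O_j$ as in \eqref{eq:Ok} (note $N_j^2=2\,c_{2j-3}c_{2j-2}^{\ast}c_{2j}^{\ast}c_{2j+1}$), and your identity $\delta_g(c_{2j-1}^{\ast}v_j^{-1})=1$ is the paper's key relation $\delta_g(O_j)=g$, applied to an arbitrary state in the same way. The one genuine difference is that the paper imports $O_k$ from \cite{SKN} and verifies $\delta_g(O_k)=g$ by direct CAR computation, whereas you \emph{derive} it structurally --- $\delta_g(c_{2j-1}^{\ast})=g+N_j$ is invertible for $g\ne 0$ by nilpotency of $N_j$, and $\delta_g$ kills $v_j^{-1}$ by \eqref{eq:delnilpotent} plus the graded Leibniz rule --- which is a cleaner explanation of where the ``magic'' operator comes from and why the argument needs $g\ne 0$.
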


\begin{proof}
As  given in  Eq.(15) of \cite{SKN}, 
for each $k\in\Z$ let
\begin{align}
\label{eq:Ok}
O_k:=
c_{2k-1}^{\, \ast}
\left(1-\frac{1}{g}\left(c_{2k}^{\, \ast}c_{2k+1}+
c_{2k-3}c_{2k-2}^{\, \ast}\right)
+\frac{2}{g^2}c_{2k-3}
c_{2k-2}^{\, \ast}
c_{2k}^{\, \ast}c_{2k+1}\right).
\end{align}
We shall show that for all $k\in\Z$
\begin{align}
\label{eq:Okg}
\delta_g(O_k)=g.
\end{align}
%%The proof of \eqref{eq:Okg} has been  essentially completed  in  \cite{SKN}.
As the model is $\sigma_2$-invariant, it is enough to show 
 the statement for a specific $k\in\Z$. So  let us consider  
\begin{align*}
%%%\label{eq:Otwo}
O_2=
c_{3}^{\, \ast}
\left(1-\frac{1}{g}\left(c_{4}^{\, \ast}c_{5}+
c_{1}c_{2}^{\, \ast}\right)
+\frac{2}{g^2}c_{1}c_{2}^{\, \ast}
c_{4}^{\, \ast}c_{5}\right)\in \Al([1,2,3,4,5,6])_{-}.
\end{align*}
Then for  the identity \eqref{eq:delg-local} to be valid it is enough to take 
$M_0=0-2=-2$ and $N_0=6+2=8$. 
 We compute 
\begin{align*}
%%\label{eq:}
\delta_g(O_2)&=\left[\widetilde{Q}(g)_{[-1,8]},\; O_2\right]_{\gamma}\\
&=\left[
\sum_{k=0}^{4} \left(gc_{2k-1}+ c_{2k-1} c^{\ast}_{2k} c_{2k+1}\right),
\; O_2\right]_{\gamma}\ \  (9=-1)\\
&=\left[g \left(c_{-1}+c_{1}+c_{3}+c_{5}+c_{7}\right)+
\left( c_{-1} c^{\ast}_{0} c_{1}
+c_{1} c^{\ast}_{2} c_{3}
+c_{3} c^{\ast}_{4} c_{5}+
c_{5} c^{\ast}_{6} c_{7}+c_{7} c^{\ast}_{8} c_{-1}\right),
\; O_2 \right]_{\gamma}\\
&=\left[ g\left(c_{1}+c_{3}+c_{5}\right)+
\left( c_{-1} c^{\ast}_{0} c_{1}
+c_{1} c^{\ast}_{2} c_{3}
+c_{3} c^{\ast}_{4} c_{5}+
c_{5} c^{\ast}_{6} c_{7}\right),
\; O_2 \right]_{\gamma},
\end{align*}
where the identification $9=-1$ is made and the 
 graded-locality \eqref{eq:gloc} is noted.
Similarly, we can verify that 
\begin{align*}
%%\label{eq:}
\delta_g(O_2)&=\left[\widehat{Q}(g)_{[-1,7]},\; O_2\right]_{\gamma}\\
&=g\left[ \left(c_{-1}+c_{1}+c_{3}+c_{5}+c_{7}\right)+
\left( c_{-1} c^{\ast}_{0} c_{1}
+c_{1} c^{\ast}_{2} c_{3}
+c_{3} c^{\ast}_{4} c_{5}+
c_{5} c^{\ast}_{6} c_{7}\right),
\; O_2 \right]_{\gamma}\\
&=g\left[ \left(c_{1}+c_{3}+c_{5}\right)+
\left( c_{-1} c^{\ast}_{0} c_{1}
+c_{1} c^{\ast}_{2} c_{3}
+c_{3} c^{\ast}_{4} c_{5}+
c_{5} c^{\ast}_{6} c_{7}\right),
\; O_2 \right]_{\gamma}.
\end{align*}
By direct computation using the
  canonical anticommutation relations 
 \eqref{eq:CAR}
 we have 
\begin{align*}
%%\label{eq:computation}
&\delta_g(O_2)\\
&=\left[ g\left(c_{1}+c_{3}+c_{5}\right)+
\left( c_{-1} c^{\ast}_{0} c_{1}
+c_{1} c^{\ast}_{2} c_{3}
+c_{3} c^{\ast}_{4} c_{5}+
c_{5} c^{\ast}_{6} c_{7}\right),
\; c_{3}^{\, \ast}
\left(1-\frac{1}{g}\left(c_{4}^{\, \ast}c_{5}+
c_{1}c_{2}^{\, \ast}\right)
+\frac{2}{g^2}c_{1}c_{2}^{\, \ast}
c_{4}^{\, \ast}c_{5}\right)
 \right]_{\gamma}\\
&=\{gc_{3}+c_{1} c^{\ast}_{2} c_{3}
+c_{3} c^{\ast}_{4} c_{5}, \; c_{3}^{\, \ast}\}
\left(1-\frac{1}{g}\left(c_{4}^{\, \ast}c_{5}+
c_{1}c_{2}^{\, \ast}\right)
+\frac{2}{g^2}c_{1}c_{2}^{\, \ast}
c_{4}^{\, \ast}c_{5}\right) \\
&\ \ -c_{3}^{\, \ast}\delta_g
\left(1-\frac{1}{g}\left(c_{4}^{\, \ast}c_{5}+
c_{1}c_{2}^{\, \ast}\right)
+\frac{2}{g^2}c_{1}c_{2}^{\, \ast}
c_{4}^{\, \ast}c_{5}\right)
\\
&=
(g+c_{1} c^{\ast}_{2} + c^{\ast}_{4} c_{5})
\left(1-\frac{1}{g}\left(c_{4}^{\, \ast}c_{5}+
c_{1}c_{2}^{\, \ast}\right)
+\frac{2}{g^2}c_{1}c_{2}^{\, \ast}
c_{4}^{\, \ast}c_{5}\right)  -c_{3}^{\, \ast}\cdot 0
\\
&=g-\left(c_{4}^{\, \ast}c_{5}+c_{1}c_{2}^{\, \ast}\right)
+\frac{2}{g} c_{1}c_{2}^{\, \ast}c_{4}^{\, \ast}c_{5}
+(c_{1} c^{\ast}_{2} + c^{\ast}_{4} c_{5})
-2\times \frac{1}{g}c_{1}c_{2}^{\, \ast}c_{4}^{\, \ast}c_{5}  \\
&=g,
\end{align*}
where we have noted  
%%%%by using  \eqref{eq:CAR}  \eqref{eq:gloc}
\begin{align*}
%%\label{eq:computation2}
&\delta_g\left(1-\frac{1}{g}\left(c_{4}^{\, \ast}c_{5}+
c_{1}c_{2}^{\, \ast}\right)
+\frac{2}{g^2}c_{1}c_{2}^{\, \ast}
c_{4}^{\, \ast}c_{5}\right)\\
&=\left[ g\left(c_{1}+c_{3}+c_{5}\right)+
\left( c_{-1} c^{\ast}_{0} c_{1}
+c_{1} c^{\ast}_{2} c_{3}
+c_{3} c^{\ast}_{4} c_{5}+
c_{5} c^{\ast}_{6} c_{7}\right),
\; 
\left(1-\frac{1}{g}\left(c_{4}^{\, \ast}c_{5}+
c_{1}c_{2}^{\, \ast}\right)
+\frac{2}{g^2}c_{1}c_{2}^{\, \ast}
c_{4}^{\, \ast}c_{5}\right)
 \right]_{\gamma}\\
&=0.
\end{align*}
Analogously we obtain  \eqref{eq:Okg} for any $k\in\Z$.

Now take  any (not necessarily homogeneous) state $\omega$ of $\Al$.
Then for any $k\in\Z$
\begin{align}
\label{eq:omeOkg}
\omega\left(\delta_g(O_k)\right)=\omega(g1)=g.
\end{align}
Thus if $g\ne0$, then $\omega$ is not  invariant under $\delta_g$.
As $\omega$ is arbitrary, there exists  no 
invariant state under $\delta_g$ and hence 
SUSY is spontaneously broken for any $g\ne0$.
 %%%in the sense of Thus by  Def.\ref{defn:broken}.
\end{proof}

\subsection{Positivity of energy density due to supersymmetry breakdown}
\label{subsec:}
  We shall discuss  the energy density for homogeneous  ground 
 states.
Let us fix relevant notation. 
A state $\omega$ on $\Al$ is called translation invariant 
if $\omega(A)=\omega\left(\sigma_{1}(A)\right)$ for all $A\in\Al$.
A state $\omega$ on $\Al$ is called homogeneous  (with periodicity $2$) 
if $\omega(A)=\omega\left(\sigma_{2}(A)\right)$ for all $A\in\Al$.

For any homogeneous state $\omega$ we can define the energy density 
 for the extended Nicolai model 
 by   the expectation value of the local Hamiltonians per site in the 
infinite-volume limit. As we can  choose any  boundary condition upon local Hamiltonians  as noted in  \cite{SIMON}, we have 
\begin{align}
\label{eq:eome}
e(g)(\omega):=\lim_{N\to \infty}
\frac{1}{2N}\omega\left(\widetilde{H}(g)_{[-N+1,N]}\right)
=\lim_{N\to \infty }
\frac{1}{2N}\omega \left(\widehat{H}(g)_{[-N+1,N+1]}\right).
\end{align}
 Since all $\widetilde{H}(g)_{[-N+1,N]}$ (as well as $\widehat{H}(g)_{[-N+1,N]}$) are positive operators 
 by definition, we have 
\begin{align}
\label{eq:eome-pos}
e(g)(\omega)\ge 0.
\end{align}

%%%We will  discuss ground states of the model.
A  general  definition of ground states for $\cstar$-systems  is  
 given in terms of the infinitesimal time evolution, see \cite{BR}.  
 Now it is  $d_g$  given in \eqref{eq:del-SUSY-relations}.  
For  homogeneous states, 
 this general  characterization of ground states  is known to be  
 equivalent to  the minimum energy-density condition  \cite{GROUND}.
The extended Nicolai model on $\Z$ is a homogeneous  model of $2$-periodicity.
It has been known that for any  translation invariant model 
 there is at least one translation invariant (pure or non-pure) ground  state.
 Hence  there exists at least one 
 homogeneous ground state $\varphi$ of the periodicity $2$ 
 for the extended Nicolai model. 
The second theorem is as follows.
\begin{thm}
\label{thm:g-positive}
Let $\varphi$ be any homogeneous ground state for the extended Nicolai model. 
The energy density $e(g)(\varphi)$
 is strictly positive if  the parameter  $g$  of the model is not $0$.
%%%(Hence  any homegenuous state
%%% has  strictly positive energy density.)
\end{thm}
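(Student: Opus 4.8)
The plan is to combine the algebraic identity $\delta_g(O_k)=g$ established in the proof of Theorem~\ref{thm:gbroken} with the fact that the energy density controls, via a Cauchy--Schwarz-type inequality, the expectation of $\delta_g^{\ast}\circ\delta_g$ and $\delta_g\circ\delta_g^{\ast}$ in a ground state. The key point is that for a ground state $\varphi$, the GNS representation $(\Hilvp,\pivp,\Omevp)$ carries a positive Hamiltonian $H_\varphi$ implementing $\alpha_g$, and the superderivation $\delta_g$ (together with $\delta_g^{\ast}$) is implemented by a closable operator on the domain $\pivp(\core)\Omevp$; concretely, one sets $\xi_A:=\pivp(\delta_g(A))\Omevp$ and $\eta_A:=\pivp(\delta_g^{\ast}(A))\Omevp$ for $A\in\core$. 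Then by \eqref{eq:del-SUSY-relations},
\begin{align}
\label{eq:proof-energy-bound}
\langle\pivp(A)\Omevp,\; H_\varphi\, \pivp(A)\Omevp\rangle
=\varphi\bigl(A^{\ast}d_g(A)\bigr)
=\Vert\xi_A\Vert^2+\Vert\eta_A\Vert^2,
\end{align}
using positivity of $\varphi$ and the $\ast$-structure (here one needs that $\varphi$ is a ground state so that the cross terms involving $H_\varphi$ are nonnegative; the clean version is that $\varphi(A^\ast d_g(A))\ge 0$ for all $A\in\core$, which is part of the ground-state condition of \cite{BR}).

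Next I would relate this to the energy density. Applying $\sigma_2$-translates of $O_1$ (using $2$-periodicity of $\varphi$ and $\sigma_2$-covariance of $\delta_g$, so that $\delta_g(\sigma_{2k}(O_1))=\sigma_{2k}(\delta_g(O_1))=g$ for all $k$), one obtains, for the block sum $S_N:=\sum_{k=-N/2+1}^{N/2}\sigma_{2k}(O_1)$ localized near $[-N+1,N]$, that $\varphi(S_N^{\ast}d_g(S_N))=\Vert\pivp(\delta_g(S_N))\Omevp\Vert^2+\Vert\pivp(\delta_g^{\ast}(S_N))\Omevp\Vert^2\ge \Vert\pivp(\delta_g(S_N))\Omevp\Vert^2$. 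But $\delta_g(S_N)=\sum_k g\cdot 1 = gN$, so $\Vert\pivp(\delta_g(S_N))\Omevp\Vert^2=g^2N^2$. On the other hand, $d_g(S_N)$ is a local operator supported in a bounded neighborhood of $[-N+1,N]$, and $\varphi(S_N^{\ast}d_g(S_N))\le \Vert S_N\Vert\cdot\varphi(|d_g(S_N)|)$-type estimates are too crude; instead I would write $d_g(S_N)=[H(g),S_N]$ and note that only the local Hamiltonian terms within distance $O(1)$ of $\mathrm{supp}(S_N)$ contribute, so $\varphi(S_N^{\ast}d_g(S_N))$ is bounded by $C\cdot N\cdot e(g)(\varphi)$ up to boundary corrections of order $O(1)$ — more precisely, using the supersymmetric form, $\varphi(S_N^{\ast}d_g(S_N))=\varphi\bigl(\delta_g(S_N)^{\ast}\delta_g(S_N)\bigr)+\varphi\bigl(\delta_g^{\ast}(S_N)^{\ast}\delta_g^{\ast}(S_N)\bigr)$ and the second summand is what must be bounded by the energy density.

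The cleaner route, which I expect to be the one that actually works, avoids $S_N$ altogether: apply \eqref{eq:proof-energy-bound} directly to $A=O_1$. This gives $\varphi(O_1^{\ast}d_g(O_1))=g^2+\Vert\pivp(\delta_g^{\ast}(O_1))\Omevp\Vert^2\ge g^2>0$, hence $\varphi(O_1^{\ast}[H(g),O_1])>0$. Since $[H(g),O_1]$ is a fixed local element, this shows $H_\varphi\pivp(O_1)\Omevp\ne 0$, i.e.\ the ground state energy is not concentrated at the bottom for the vector $\pivp(O_1)\Omevp$. To pass from this to strict positivity of the \emph{density}, I would invoke translation invariance: $e(g)(\varphi)=0$ would force $\varphi$ to be a zero-energy (hence supersymmetric) ground state in the sense that $\varphi(A^{\ast}d_g(A))=0$ for all $A\in\core$ (by homogeneity plus the ergodic decomposition into pure ground states, each of which would then have to be SUSY-invariant), contradicting $\varphi(O_1^{\ast}d_g(O_1))\ge g^2>0$.

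The main obstacle is the last implication: deducing that vanishing energy density forces $\varphi(A^{\ast}d_g(A))=0$ for all local $A$. For translation-invariant ground states this is essentially the statement that the energy density is the infimum of $\varphi(A^\ast d_g(A))$ over suitably normalized local $A$, or equivalently that $H_\varphi$ annihilates $\Omevp$ when $e(g)(\varphi)=0$; this requires a clustering/averaging argument (replace $A$ by its spatial averages over $[-N+1,N]$, divide by the volume, and use $2N\,e(g)(\varphi)\ge \varphi(\overline{d_g}$ applied to the averaged observable$)\ge 0$, with the left side tending to $0$). I would either cite the ground-state characterization for translation-invariant systems in \cite{GROUND} and \cite{BR} directly — since it identifies ground states with minimizers of the energy density and in particular gives $\varphi(A^\ast d_g(A))\ge 0$ with the density being the relevant infimum — or prove the needed inequality $\varphi(A^{\ast}d_g(A))\le 2N\,e(g)(\varphi)+o(N)$ for $A=S_N$ by the locality bookkeeping sketched above, then divide by $N^2$ on one side and by $N$ on the other to reach a contradiction unless $e(g)(\varphi)>0$. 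Once the energy-density lower bound $e(g)(\varphi)\ge g^2/C$ (for an explicit combinatorial constant $C$ coming from the overlap of supports in $\delta_g^\ast(S_N)$) is in hand, the theorem follows for all $g\ne 0$.
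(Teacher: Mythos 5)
There is a genuine gap: the identity $\varphi(A^{\ast}d_g(A))=\Vert\pivp(\delta_g(A))\Omevp\Vert^2+\Vert\pivp(\delta_g^{\ast}(A))\Omevp\Vert^2$ on which both of your routes rest is false. Expanding $\varphi\bigl(A^{\ast}\,\delta_g^{\ast}(\delta_g(A))\bigr)$ by the graded Leibniz rule yields $\varphi\bigl(\delta_g(A)^{\ast}\delta_g(A)\bigr)$ only up to the ``boundary'' term $\varphi\bigl(\delta_g^{\ast}(A^{\ast}\delta_g(A))\bigr)$, and that term vanishes precisely when $\varphi$ is $\delta_g^{\ast}$-invariant --- which Theorem \ref{thm:gbroken} says never happens for $g\ne0$. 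A one-mode counterexample makes this concrete: for $Q=gc_1$ one has $H=\{gc_1,\,gc_1^{\ast}\}=g^2\cdot 1$, hence $d_g\equiv\zeromap$ and $\varphi(A^{\ast}d_g(A))=0$ for every $A$, yet $\delta_g(c_1^{\ast})=g$ gives $\Vert\pivp(\delta_g(c_1^{\ast}))\Omevp\Vert^2=g^2>0$. Your own estimates already expose the inconsistency: the claimed lower bound $\varphi(S_N^{\ast}d_g(S_N))\ge g^2N^2$ cannot coexist with an upper bound of order $N$ coming from locality, for any finite energy density; it is the lower bound that is wrong. Consequently the ``cleaner route'' $\varphi(O_1^{\ast}d_g(O_1))\ge g^2>0$ also fails, and the final step (that vanishing density forces $\varphi(A^{\ast}d_g(A))=0$ for all local $A$) is then moot.

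The argument that works does not pass through $d_g$ at all. Keep the average $o(n)=\frac{1}{n}\sum_{k=1}^{n}O_k$, for which $\varphi(\delta_g(o(n)))=g$ exactly; write $\delta_g(o(n))=\bigl[\widetilde{Q}_n,\,o(n)\bigr]_{\gamma}$ with the local supercharge $\widetilde{Q}_n:=\widetilde{Q}(g)_{[-3,2(n+2)]}$; and apply Cauchy--Schwarz directly to the graded commutator in the GNS representation, giving $\bigl|\varphi(\delta_g(o(n)))\bigr|\le\bigl(\Vert\pivp(\widetilde{Q}_n^{\ast})\Omevp\Vert+\Vert\pivp(\widetilde{Q}_n)\Omevp\Vert\bigr)\cdot\Vert o(n)\Vert$. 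Two quantitative inputs you did not use then finish the proof: (i) since the $O_k$ are odd elements with uniformly bounded overlap of supports, $\Vert o(n)\Vert=\mathrm{O}(n^{-1/2})$ (Lemma \ref{lem:onestimate}); and (ii) $\Vert\pivp(\widetilde{Q}_n)\Omevp\Vert^2+\Vert\pivp(\widetilde{Q}_n^{\ast})\Omevp\Vert^2=\varphi\bigl(\widetilde{H}(g)_{[-3,2(n+2)]}\bigr)=\mathrm{o}(n)$ whenever $e(g)(\varphi)=0$. The product is then $\mathrm{o}(1)$, contradicting $\varphi(\delta_g(o(n)))=g\ne0$. Note that this route uses only homogeneity of $\varphi$ (so that the energy density exists), not the ground-state positivity $\varphi(A^{\ast}d_g(A))\ge0$, and the supersymmetric relation \eqref{eq:del-SUSY-relations} enters only through the identity $\varphi(\widetilde{H}_n)=\Vert\pivp(\widetilde{Q}_n)\Omevp\Vert^2+\Vert\pivp(\widetilde{Q}_n^{\ast})\Omevp\Vert^2$.
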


\begin{proof}
 Some idea of the  proof which we will present  below is  owing  to  Buchholz \cite{BUCHLNP} 
and Buchholz-Ojima \cite{BUOJ}.
We will  use the standard formulation of GNS representations, see  \cite{BR}. 
By  $\bigl(\Hilvp,\; \pivp,\; \Omevp  \bigr)$
 we denote the  GNS representation  associated to the state  $\varphi$ of $\Al$. Precisely,   
$\pivp$ is  a  homomorphism from $\Al$
 into $\Bl(\Hilvp)$ (the set of all bounded linear 
 operators on the Hilbert space $\Hilvp$), 
 and  $\Omevp\in\Hilvp$ is a cyclic vector such that 
$\varphi(A)=\left<\Omevp,\; \pivp(A)\Omevp\right>$  for all $A\in \Al$.
% \newcommand{\bra}[1]{\langle #1|}%%%%
%\newcommand{\ket}[1]{|#1 \rangle}%%%%			
%$\bra$

We  consider finite  averages of local operators $\{O_k \}$ 
 defined in \eqref{eq:Ok} under shift-translations:
For $n\in\NN$ let   
\begin{align}
\label{eq:defon}
o(n):=\frac{1}{n}
\sum_{k=1}^{n}O_k \in \Al([-1,2n+1])_{-}.
%%%=\frac{1}{n} \sum_{k=1}^{n}\sigma_{2k}(O_0) 
\end{align}
We shall study the asymptotic behavior of $\varphi\left(\delta_g(o(n))\right)$
 as $n\to \infty$.
%%%%:=\frac{1}{n}\sum_{k=1}^{n}O_k
By \eqref{eq:Okg} we have  
\begin{align}
\label{eq:delg-on=n}
\varphi\left(\delta_g(o(n))\right)=
\frac{1}{n}
\sum_{k=1}^{n}
\varphi\left(\delta_g(O_k)\right)=\frac{1}{n}
\sum_{k=1}^{n}g=g.
%%%%:=\frac{1}{n}\sum_{k=1}^{n}O_k
\end{align}

We can rewrite 
$\delta_g(o(n))\in\core$ in terms of  finite supercharges
 which are located in a slightly larger 
 region  including   the support region  of $o(n)$.
As in \eqref{eq:delg-local} by using local supercharges \eqref{eq:Qgper} 
 under  periodic boundary conditions we have   
\begin{equation}
\label{eq:delg-on-per}
\delta_g(o(n))=\left[\widetilde{Q}(g)_{[-3,2(n+2)]},\; o(n)\right]_{\gamma}.
\end{equation}
Similarly, we may  use   free-boundary supercharges  \eqref{eq:Qgfree} 
 as  nothing will change  due to  the choice of boundary conditions.
%We shall  estimate
%$ \varphi\left(\delta_g(o(n))\right)$.
 By using  the GNS representation  $\bigl(\Hilvp,\; \pivp,\; \Omevp  \bigr)$
 we have 
\begin{align}
\label{eq:GNSestimate}
&\varphi\left(\delta_g(o(n))\right)\\
%%&=
%%\left<\Omevp, \pivp\left(\delta_g(o(n))\right)\Omevp\right>\\
&=
\left<\Omevp, \pivp\left(\left[\widetilde{Q}(g)_{[-3,2(n+2)]},\; o(n)\right]_{\gamma}\right)\Omevp\right>\nonumber \\
&=
\left<\Omevp, \left( \pivp\left(\widetilde{Q}(g)_{[-3,2(n+2)]}\right)
\pivp(o(n))+\pivp(o(n)) \pivp\left(\widetilde{Q}(g)_{[-3,2(n+2)]}\right)
\right) \Omevp\right>\nonumber \\
&=
\left<\pivp\left(\widetilde{Q}(g)_{[-3,2(n+2)]}\right)^{\ast}\Omevp, 
\pivp(o(n))\Omevp\right>
+
\left<\pivp(o(n))^{\ast}\Omevp, 
 \pivp\left(\widetilde{Q}(g)_{[-3,2(n+2)]}\right)
\Omevp\right>.
\end{align}
As $\Omevp$ is a normalized vector, by using the triangle inequality and 
 Cauchy-Schwarz inequality 
this yields the following estimate 
\begin{align}
\label{eq:cru}
&\left|\varphi\left(\delta_g(o(n))\right)\right| \nonumber\\
&\le 
\left\Vert \pivp\left(\widetilde{Q}(g)_{[-3,2(n+2)]}\right)^{\ast}\Omevp \right\Vert 
\cdot \Vert
\pivp(o(n)) \Omevp\Vert +
 \Vert \pivp(o(n))^{\ast}\Omevp \Vert \cdot 
 \left \Vert \pivp\left(\widetilde{Q}(g)_{[-3,2(n+2)]}\right) \Omevp\right \Vert \nonumber
\\
&\le 
\left(
\left\Vert \pivp\left(\widetilde{Q}(g)_{[-3,2(n+2)]}\right)^{\ast}
\Omevp \right\Vert+
\left\Vert \pivp\left( \widetilde{Q}(g)_{[-3,2(n+2)]} \right) \Omevp\right\Vert 
\right)\cdot \Vert o(n)\Vert.
\end{align}
By applying  Lemma \ref{lem:onestimate}
and Lemma \ref{lem:QnOmevp} which will be shown later  into the above estimate \eqref{eq:cru} we obtain
\begin{align}
\label{eq:asymzero}
\lim_{n\to\infty} \left|\varphi\left(\delta_g(o(n))\right)\right| =0.
\end{align}
This contradicts with \eqref{eq:delg-on=n} when $g\ne 0$.
Thus when $g\ne 0$  the assumption of Lemma \ref{lem:QnOmevp}
 does not hold, and accordingly  $e(g)(\varphi)$ should be non-zero.
\end{proof}

Now we will show  two lemmas  used in  the above theorem.
We  recall the  Landau notation: 
$\BO$ is so called ``big-O'', 
 and $\lo$ is so called ``little-o''. 
%%%readers in  theoretical physics, we shall provide a  
 The following 
  mathematical  statement  
  claims   non-existence of  averaged   fermion operators (fermion observables at infinity)  
  in the infinite-volume limit. 
As the  estimate is  also  essential,
 we shall recapture    its derivation from  the original work \cite{LAN-ROB}.
\begin{lem}
\label{lem:onestimate}
\begin{align}
\label{eq:on-asymp}
\Vert o(n)\Vert \sim \mathrm{O}
 \left(\frac{1}{\sqrt{n}}\right)\  \text{as}\ 
 n\to\infty.
\end{align}
In particular, $\lim_{n\to\infty}o(n)=0$ in norm.
\end{lem}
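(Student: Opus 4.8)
The plan is to establish this by the Lanford--Robinson-type norm estimate for sums of graded almost-anticommuting odd elements, as the statement already hints. First I would record the two elementary facts about the operators $O_k$ of \eqref{eq:Ok} that make the argument run. Each $O_k$ is a shift translate of $O_2$ (namely $O_k=\sigma_{2(k-2)}(O_2)$), so $\Vert O_k\Vert=\Vert O_2\Vert=:\kappa$ for every $k\in\NN$, and each $O_k$ lies in $\Alo$ with support contained in the fixed-length interval $\{2k-3,2k-2,2k-1,2k,2k+1\}$. Consequently, by the graded-locality \eqref{eq:gloc}, $\{O_j^{\ast},O_k\}=0$ whenever $|j-k|\ge 3$, since in that range the supports of $O_j$ (hence of $O_j^{\ast}$) and of $O_k$ are disjoint. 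Note also that the union of the supports of $O_1,\dots,O_n$ is $[-1,2n+1]$, consistent with \eqref{eq:defon}.

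Next, set $B(n):=\sum_{k=1}^{n}O_k=n\,o(n)\in\core$. Since $B(n)^{\ast}B(n)$ and $B(n)B(n)^{\ast}$ are positive operators, $\Vert o(n)\Vert^{2}=n^{-2}\Vert B(n)^{\ast}B(n)\Vert\le n^{-2}\bigl\Vert B(n)^{\ast}B(n)+B(n)B(n)^{\ast}\bigr\Vert$. Expanding the two double sums and relabelling the summation indices in the second one yields
\[
B(n)^{\ast}B(n)+B(n)B(n)^{\ast}=\sum_{j,k=1}^{n}\bigl(O_j^{\ast}O_k+O_kO_j^{\ast}\bigr)=\sum_{j,k=1}^{n}\{O_j^{\ast},O_k\}.
\]
By the disjoint-support observation only the pairs with $|j-k|\le 2$ contribute; there are at most $5n$ such pairs, and each term satisfies $\Vert\{O_j^{\ast},O_k\}\Vert\le 2\kappa^{2}$. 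Hence $\Vert B(n)^{\ast}B(n)+B(n)B(n)^{\ast}\Vert\le 10\,\kappa^{2}n$, and therefore $\Vert o(n)\Vert\le\sqrt{10}\,\kappa\,n^{-1/2}=\BO(1/\sqrt{n})$, which in particular gives $\lim_{n\to\infty}o(n)=0$ in norm.

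I do not anticipate a real obstacle; the argument is essentially a recapitulation of \cite{LAN-ROB} in the graded setting. The only points demanding care are (i) the index-symmetrization step that rewrites $B(n)^{\ast}B(n)+B(n)B(n)^{\ast}$ as a sum of anticommutators $\{O_j^{\ast},O_k\}$, so that the graded-locality \eqref{eq:gloc} can be invoked, and (ii) the bookkeeping of the finite interaction range ($R=3$ here) coming from the explicit supports appearing in \eqref{eq:Ok}, which fixes the numerical constant multiplying $n$. The conceptual mechanism is simply that passing from the $\cstar$-norm of a sum of $n$ almost-anticommuting bounded odd terms to the square root of the norm of its diagonal-plus-near-diagonal part costs only a factor $\sqrt{n}$, so that the normalised average $o(n)=n^{-1}B(n)$ decays like $n^{-1/2}$.
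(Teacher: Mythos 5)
Your proof is correct and follows essentially the same route as the paper: the inequality $\Vert F\Vert^{2}\le\Vert F^{\ast}F+FF^{\ast}\Vert$, expansion into anticommutators $\{O_j^{\ast},O_k\}$, the graded-locality vanishing for $|j-k|>2$, and the uniform bound $2\Vert O_1\Vert^{2}$ on each surviving term, yielding $\Vert o(n)\Vert\le C/\sqrt{n}$. The only differences are cosmetic (working with $B(n)=n\,o(n)$ and the explicit constant $\sqrt{10}\,\kappa$ versus the paper's $C$).
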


\begin{proof}
For any $F\in \Al$ 
the  inequality $\Vert  F \Vert^2= \Vert F^{\ast} F \Vert\le  \Vert F^{\ast} F+F F^{\ast} \Vert $ holds.
By using this obvious inequality we obtain 
\begin{align}
\label{eq:on-estimateCAR}
\Vert o(n)\Vert^{2}\le 
\frac{1}{n^2}
\sum_{k=1}^{n} \sum_{k^{\prime}=1}^{n}
\Vert \{O_{k}^{\ast},\; O_{k^{\prime}}  \}\Vert.
\end{align}
Each term is estimated from the above by some constant: 
\begin{align}
\label{eq:each-estimate}
\Vert \{O_{k}^{\ast},\; O_{k^{\prime}}  \} \Vert
\le \Vert O_{k}^{\ast}  O_{k^{\prime}}   \Vert+ 
 \Vert   O_{k^{\prime}} O_{k}^{\ast}  \Vert
\le  2  \Vert O_{k}^{\ast} \Vert \cdot \Vert O_{k^{\prime}}   \Vert
=  2   \Vert O_{1}   \Vert^2\equiv C^2/5
\end{align}
By the 
 graded-locality \eqref{eq:gloc} 
 and the definition of  $O_k$  given in \eqref{eq:Ok} we have 
\begin{align}
\label{eq:vanish}
 \{O_{k}^{\ast},\; O_{k^{\prime}}  \}=0 \quad \text{if}\ |k-k^{\prime}|>2.
\end{align}
Thus for each fixed $k\in\{1,2,\cdots,n\}$ 
 there are at most five  $k^{\prime}\in\{1,2,\cdots,n\}$
such that $\{O_{k}^{\ast},\; O_{k^{\prime}}  \}$ does not vanish.
By applying \eqref{eq:each-estimate} \eqref{eq:vanish}
to \eqref{eq:on-estimateCAR} we  obtain 
\begin{align}
\label{eq:on-est2}
\Vert o(n)\Vert^{2}\le 
\frac{1}{n^2}
\sum_{k=1}^{n} 5 \times C^2/5=\frac{C^2}{n}.
\end{align}
it is equivalent  to 
$\Vert o(n)\Vert \le \frac{C}{\sqrt{n}}$ giving  \eqref{eq:on-asymp}.
\end{proof}

\begin{lem}
\label{lem:QnOmevp}
If the energy density $e(g)(\varphi)$ 
 is equal to $0$, then 
%%%the following estimates hold:
\begin{align}
\label{eq:QnOmevp}
\left \Vert \pivp\left(\widetilde{Q}(g)_{[-3,2(n+2)]}\right)^{\ast}\Omevp \right\Vert 
\sim \mathrm{o} \left({\sqrt{n}}\right),\ 
\left \Vert \pivp\left( \widetilde{Q}(g)_{[-3,2(n+2)]} \right) \Omevp\right \Vert 
\sim \mathrm{o} \left({\sqrt{n}}\right)\  \text{as}\ 
 n\to\infty.
\end{align}
\end{lem}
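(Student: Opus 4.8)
The plan is to read off both quantities appearing in \eqref{eq:QnOmevp} from the single number $\varphi\bigl(\widetilde{H}(g)_{[-3,2(n+2)]}\bigr)$, and then to bound that number by means of the hypothesis $e(g)(\varphi)=0$. For the first part I would use the supersymmetric form of the local Hamiltonian, $\widetilde{H}(g)_{[-3,2(n+2)]}=\bigl\{\widetilde{Q}(g)_{[-3,2(n+2)]},\;\widetilde{Q}(g)_{[-3,2(n+2)]}^{\ast}\bigr\}$. Since $\widetilde{Q}(g)_{[-3,2(n+2)]}$ is a finite sum it lies in $\core\subset\Al$, so that $\pivp\bigl(\widetilde{Q}(g)_{[-3,2(n+2)]}^{\ast}\bigr)=\pivp\bigl(\widetilde{Q}(g)_{[-3,2(n+2)]}\bigr)^{\ast}$; substituting this into $\varphi(\,\cdot\,)=\langle\Omevp,\pivp(\,\cdot\,)\Omevp\rangle$ and expanding the anticommutator gives the identity $\varphi\bigl(\widetilde{H}(g)_{[-3,2(n+2)]}\bigr)=\bigl\Vert\pivp(\widetilde{Q}(g)_{[-3,2(n+2)]})\Omevp\bigr\Vert^{2}+\bigl\Vert\pivp(\widetilde{Q}(g)_{[-3,2(n+2)]})^{\ast}\Omevp\bigr\Vert^{2}$. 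In particular each of the two norms in \eqref{eq:QnOmevp} is bounded above by $\varphi\bigl(\widetilde{H}(g)_{[-3,2(n+2)]}\bigr)^{1/2}$.

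Next I would show that $e(g)(\varphi)=0$ forces $\varphi\bigl(\widetilde{H}(g)_{[-3,2(n+2)]}\bigr)=\mathrm{o}(n)$ as $n\to\infty$. The interval $[-3,2(n+2)]$ has $2n+8$ sites, and since the extended Nicolai interaction is of finite range with uniformly bounded local terms and $\varphi$ is homogeneous, the expectation per site, $\frac{1}{2n+8}\varphi\bigl(\widetilde{H}(g)_{[-3,2(n+2)]}\bigr)$, converges to the energy density $e(g)(\varphi)$; this is the same shape- and boundary-condition-independence of the energy density that is already invoked in \eqref{eq:eome} (see \cite{SIMON,LAN-ROB,BR}). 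Under the hypothesis it is therefore $\mathrm{o}(n)$, which is all that is needed. A slightly finer count actually gives $\mathrm{O}(1)$: one decomposes $\widetilde{H}(g)_{[-3,2(n+2)]}$ into $\sigma_{2}$-translates of one fixed bulk term together with an $\mathrm{O}(1)$ number of ``seam'' terms of bounded norm, uses $\varphi\circ\sigma_{2}=\varphi$ so that every bulk term has the common expectation value $2e(g)(\varphi)$, which vanishes by hypothesis, and is left with only the bounded seam contribution.

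Finally, inserting $\varphi\bigl(\widetilde{H}(g)_{[-3,2(n+2)]}\bigr)=\mathrm{o}(n)$ back into the identity of the first paragraph, and using that each of the two summands on the right-hand side is non-negative, yields $\bigl\Vert\pivp(\widetilde{Q}(g)_{[-3,2(n+2)]})\Omevp\bigr\Vert^{2}=\mathrm{o}(n)$ and the analogous estimate for the adjoint; taking square roots gives exactly \eqref{eq:QnOmevp}. The only genuinely delicate step is the middle one: the energy density is \emph{defined} through the centred intervals $[-N+1,N]$, so one must check that it still controls the expectation on the off-centred interval $[-3,2(n+2)]$. As $\varphi$ is assumed only $2$-periodic and not fully translation invariant, the cleanest way to secure this is the explicit $\sigma_{2}$-translate decomposition just described, which reduces the point directly to the defining limit \eqref{eq:eome}; alternatively one may simply quote the standard convergence of energy densities along van Hove sequences for finite-range lattice interactions.
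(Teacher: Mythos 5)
Your proposal is correct and follows essentially the same route as the paper: expand $\varphi\bigl(\widetilde{H}(g)_{[-3,2(n+2)]}\bigr)=\bigl\Vert\pivp(\widetilde{Q}(g)_{[-3,2(n+2)]}^{\ast})\Omevp\bigr\Vert^{2}+\bigl\Vert\pivp(\widetilde{Q}(g)_{[-3,2(n+2)]})\Omevp\bigr\Vert^{2}$ via the GNS representation, use $e(g)(\varphi)=0$ to get $\varphi\bigl(\widetilde{H}(g)_{[-3,2(n+2)]}\bigr)=\mathrm{o}(n)$, and conclude by non-negativity of each summand. The only difference is that you explicitly justify why the energy density controls the off-centred interval $[-3,2(n+2)]$ (via the $\sigma_{2}$-translate decomposition), a point the paper simply absorbs into its citation of the boundary- and shape-independence of the energy density.
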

%%%%
%%%%
\begin{proof}
By the assumption $e(g)(\varphi)=0$  we have 
\begin{align}
\label{eq:density-zero}
0=e(g)(\varphi)=\lim_{n\to\infty }
\frac{1}{2n+8}\varphi\left(\widetilde{H}(g)_{[-3,2(n+2)]}\right)
=\lim_{n\to\infty }
\frac{1}{2n}\varphi\left(\widetilde{H}(g)_{[-3,2(n+2)]}\right).
\end{align}
%% where we have dropped  $8$ as such constant  is irrelevant as $n\to\infty$. 
By \eqref{eq:Hgper}
\begin{align}
\label{eq:}
&\varphi\left(\widetilde{H}(g)_{[-3,2(n+2)]}\right)\nonumber \\
&=\varphi\left( \Bigl\{\widetilde{Q}(g)_{[-3,2(n+2)]},\; 
{\widetilde{Q}(g)}_{[-3,2(n+2)]}^{\ast} \Bigr\}  \right)\nonumber \\
&=\left<\pivp\left({\widetilde{Q}(g)}_{[-3,2(n+2)]}^{\ast}\right)\Omevp, 
\pivp\left({\widetilde{Q}(g)}_{[-3,2(n+2)]}^{\ast}\right)\Omevp\right>
\nonumber \\
&\quad +\left<\pivp\left(\widetilde{Q}(g)_{[-3,2(n+2)]}\right)\Omevp, 
\pivp\left(\widetilde{Q}(g)_{[-3,2(n+2)]}\right)\Omevp\right> \nonumber \\
&=\left\Vert \pivp\left({\widetilde{Q}(g)}_{[-3,2(n+2)]}^{\ast}\right)\Omevp 
\right\Vert^2
+\left \Vert \pivp\left(\widetilde{Q}(g)_{[-3,2(n+2)]}\right)\Omevp 
 \right\Vert^2.
\end{align}
By this  together with
 \eqref{eq:density-zero} 
we have 
\begin{align}
\label{eq:}
\left \Vert \pivp\left({\widetilde{Q}(g)}_{[-3,2(n+2)]}^{\ast}\right)\Omevp 
 \right\Vert^2
\sim \mathrm{o} \left(2n\right),\quad 
\left\Vert \pivp\left(\widetilde{Q}(g)_{[-3,2(n+2)]} \right)\Omevp \right\Vert^2
\sim \mathrm{o} \left(2n\right).
\end{align}
From these  we obtain  \eqref{eq:QnOmevp}. 
%\begin{align*}
%\Vert \pivp\left(\widetilde{Q}(g)_{[-3,2(n+2)]}^{\ast}\right)\Omevp \Vert
%\sim \lo \left(\sqrt{n}\right),\quad 
% \Vert \pivp\left(\widetilde{Q}(g)_{[-3,2(n+2)]}^{\ast}\right)\Omevp \Vert^2
%\sim \lo \left(\sqrt{n}\right).
%%%\end{align*}
\end{proof}

\begin{acknowledgments}
I  thank Hosho Katsura,  Yu Nakayama and Noriaki Sannomiya for 
 correspondence.
I thank  members  of  Kanazawa University for discussion.
\end{acknowledgments}
\bibliography{moriya-susy}% Produces the bibliography via BibTeX.
\end{document}